\documentclass[journal]{IEEEtran}
\usepackage{cite}


\usepackage[cmex10]{amsmath}
\usepackage{bm, amsthm}
\usepackage{amsfonts}
\usepackage{amssymb}
\usepackage[acronym,nomain]{glossaries}
\usepackage[final]{graphicx}
\usepackage{tablefootnote}
\usepackage{pgfplots}
\usepackage{filecontents}
\usepgfplotslibrary{fillbetween}
\pgfplotsset{compat=1.14}

\usepackage{algorithmic}

%
\usepackage{array}

\ifCLASSOPTIONcompsoc
 \usepackage[caption=false,font=normalsize,labelfont=sf,textfont=sf]{subfig}
\else
 \usepackage[caption=false,font=footnotesize]{subfig}
\fi

\newcommand{\changed}[1]{\textcolor{black}{#1}}

%
\usepackage{fixltx2e}
\usepackage{stfloats}

\newtheorem{lemma}{Lemma}

\newcolumntype{P}[1]{>{\centering\arraybackslash}p{#1}}
\newcolumntype{M}[1]{>{\centering\arraybackslash}m{#1}}
\setlength\extrarowheight{4pt}

\newcolumntype{L}[1]{>{\raggedright\let\newline\\\arraybackslash\hspace{0pt}}m{#1}}
\newcolumntype{C}[1]{>{\centering\let\newline\\\arraybackslash\hspace{0pt}}m{#1}}
\newcolumntype{R}[1]{>{\raggedleft\let\newline\\\arraybackslash\hspace{0pt}}m{#1}}

  \pgfplotsset{ every non boxed x axis/.append style={x axis line style=-},
     every non boxed y axis/.append style={y axis line style=-}}

\hyphenation{op-tical net-works semi-conduc-tor}


  \newcommand{\SuppM}[1]{{\mathcal{S}}_{#1}} 	
 \newcommand{\MTXT}[1]{{#1}^{\rm T}} 		
  \newcommand{\SensM}{\bm{A}}				
  \newcommand{\MeasM}{\bm{\Phi}}				
   \newcommand{\Basis}{\bm{\Psi}} 			
   
   
    \newcommand{\osnr}{\eta_{\rm O}}  
  \newcommand{\rsnr}{\eta_{\rm R}} 
  
  \newcommand{\meanof}[1]{\mathbb{E}\left\{{#1}\right\}}    
 \newcommand{\varof}[1]{\text{var}\left\{ {#1}\right\}}   

\definecolor{nicered}{RGB}{231,84,121}
\definecolor{darkblue}{RGB}{20,43,140}
\definecolor{darkred}{RGB}{216,41,0}
\definecolor{nicegreen}{RGB}{32,127,43}
\definecolor{darkbrown}{RGB}{78,50,8}

 
\newacronym{cs}{CS}{Compressed Sensing}
\newacronym{smv}{SMV}{Single Measurement Vector} 
\newacronym{mmv}{MMV}{Multiple Measurement Vector}
\newacronym{mwc}{MWC}{Modulated Wideband Converter} 
\newacronym{emd}{EMD}{Earth Mover's Distance}
\newacronym{rmse}{RMSE}{Root Mean Squared Error}
\newacronym{rip}{RIP}{Restricted Isometry Property}
\newacronym{snr}{SNR}{Signal-to-Noise Ratio}
\newacronym{pdf}{PDF}{Probability Density Function}
\newacronym{cdf}{CDF}{Cumulative  Distribution Function}


\newglossaryentry{l-norm}
{
    name= $\|\bm{x}\|_p$,
    description={An $\ell_p$ norm of some vector $\bm{x}$}
}





\begin{document}
\title{On the SNR Variability in Noisy Compressed Sensing}


\author{\IEEEauthorblockN{Anastasia Lavrenko~\IEEEmembership{Student~Member,~IEEE},
Florian R\"omer~\IEEEmembership{Senior Member,~IEEE},\\
{Giovanni Del Galdo~\IEEEmembership{Member,~IEEE}},
and 
{Reiner Thom\"a~\IEEEmembership{Fellow,~IEEE}}}
\thanks{

 A. Lavrenko (corresponding author) and F. R\"omer  are with the Institute for Information Technology at Technische Universit\"at Ilmenau, Helmholzplatz 2, 98693, Ilmenau, Germany (email:\{anastasia.lavrenko, florian.roemer\}@tu-ilmenau.de). 
G. Del Galdo and R. Thom\"a are with the Institute for Information Technology at Technische Universit\"at Ilmenau and with the  Fraunhofer Institute of Integrated Circuits (IIS), Helmholzplatz 2, 98693, Ilmenau, Germany (email: giovanni.delgaldo@iis.fraunhofer.de, reiner.thomae@tu-ilmenau.de).
This work was partially supported by the Deutsche
Forschungsgemeinschaft (DFG) project CLASS (grant MA 1184/23-1) and the Carl-Zeiss Foundation under the postdoctoral scholarship project  "EMBiCoS".

}}


\IEEEtitleabstractindextext{%
\begin{abstract}
Compressed sensing (CS) is a sampling paradigm that allows to simultaneously measure and compress signals that are sparse or compressible in some domain.
The choice of a sensing matrix that carries out the measurement has a defining impact on the system performance and it is often advocated to draw its elements randomly. 
It has been noted that in the presence of input (signal) noise, the application of the sensing matrix causes SNR degradation due to the noise folding effect. In fact, it might also result in the variations of the output SNR in compressive measurements over the support of the input signal, potentially resulting in unexpected non-uniform system performance. In this work, we study the impact of a distribution from which the elements of a sensing matrix are drawn on the spread of the output SNR. We derive analytic expressions for several common types of sensing matrices and show that the SNR spread grows with the decrease of the number of measurements. This makes its negative effect especially pronounced for high compression rates that are often of interest in CS.
\end{abstract}

\begin{IEEEkeywords}
noisy compressed sensing, SNR variability, sensing matrix, noise folding, sparse signals.
\end{IEEEkeywords}}

\maketitle

\IEEEdisplaynontitleabstractindextext

%
\IEEEpeerreviewmaketitle

\section{Introduction}

Recent developments in the areas of sampling theory 
and numerical optimization have recently given rise to the novel sampling framework of \gls{cs} \cite{donoho2006compressed, CandesEtAl2008, EldarEtAl2012CSbook}.  
Mathematically, its primary focus is solving the following under-determined system of linear equations  
\begin{equation}
	\bm{y} =\SensM \bm{x} + \bm{n},
	\label{eq:Ch2_MeasNoisy1}
\end{equation}
where $\bm{y} \in \mathbb{R}^{M \times 1}$ contains linear measurements of some signal $\bm{x} \in \mathbb{R}^{N \times 1}$ obtained via application of a sensing matrix $\bm{A} \in \mathbb{R}^{M \times N}$ with $M < N$, while $\bm{n} \in \mathbb{R}^{M \times 1}$ represents additive noise that is typically modeled either as deterministic and bounded \cite{CandesEtAl2006} or as white Gaussian \cite{Haupt2006NoisyReconstr, Ben-Haim2010Coherence}. Additionally, the input vector $\bm{x}$ in \eqref{eq:Ch2_MeasNoisy1} is assumed to be sparse, meaning that only $K \ll N$ of its elements are non-zero \cite{donoho2006compressed, CandesEtAl2008, EldarEtAl2012CSbook, DonohoEtAl2003}. A number of algorithms exists in the literature that allow to efficiently solve \eqref{eq:Ch2_MeasNoisy1} including greedy algorithms \cite{tropp2004greed} and methods based on convex relaxation \cite{tropp2006just}.


In the \gls{cs} setting, the choice of the sensing matrix has a defining impact on the reconstruction accuracy \cite{CandesEtAl2008, EldarEtAl2012CSbook}. 
It has been extensively studied especially with respect to recovery bounds in both noise-free and noisy settings.
Two particularly well-studied matrix fitness measures  are the matrix coherence \cite{elad2007optimized, CandesEtAl2008, Ben-Haim2010Coherence} and the \gls{rip} \cite{candes2008restricted, baraniuk2008simple, CandesEtAl2008}.   
Recently, several papers have also discussed the effect that the application of a sensing matrix has on the so-called input noise that is added to $\bm{x}$ prior to the measurement \cite{aeron2010information, ben2012performance}. Among these, the most prominent is the noise folding which shows itself in the increase of the  input noise power proportional to the compression ratio $\frac{N}{M}$ \cite{Arias-CastroEtAl2011, DavenportEtAl2012}. 

Another important effect arising from applying $\SensM$ in \eqref{eq:Ch2_MeasNoisy1} that has been largely overlooked so far is the variability of the output \gls{snr}. It turns out that the effective signal power in compressed measurements depends on the entries of the sensing matrix corresponding to the support of the input signal \cite{lavrenko2014sensing}. As a result, for a fixed input \gls{snr}, the output \gls{snr} becomes dependent on the signal support. 
\changed{
Bounding the recovered \gls{snr} for the best-case scenario of correct support recovery shows that this effect can potentially lead to support-dependent recovery guarantees. Subsequently, 
we can also expect a
non-uniform support recovery performance. 
%
To evaluate the extent of such \gls{snr} variations, we investigate the spread of the output \gls{snr} on the example of sensing matrices commonly used in \gls{cs}. To do so, we model the effective signal power as a random process whose characteristics are determined by the distribution from which the elements of $\SensM$ are drawn. 
We demonstrate that the coefficient of variation of the output \gls{snr} is inversely proportional to $\sqrt{M}$, which indicates that it can be significant for compression rates typical in \gls{cs}. Hence, this effect should be taken into account while designing the measurement, e.g., by choosing the sensing matrix that minimizes the SNR spread for a certain level of matrix coherence, or by choosing the number of measurements according to the lower  (worst-case) attainable SNR levels for a given $\SensM$.}  



\section{Noisy Compressive Sensing}
\label{sec:NCS}


\subsection{Noise Model}
\label{sec:SysM}

In the noisy \gls{cs} setting, two types of additive noise occur:  signal or input noise that represents noise sources acting before the compression takes place, i.e., before the application of $\SensM$, and measurement noise that accounts for the noise sources that act afterwards \cite{Arias-CastroEtAl2011, DavenportEtAl2012}. 
In light of that, we write \eqref{eq:Ch2_MeasNoisy1} as
\begin{equation}
	\bm{y} =\SensM \bm{x} + \bm{n} 
	= \SensM( \bm{x} + \bm{n}_{\rm s}) + \bm{n}_{\rm m},
	\label{eq:Ch2_MeasNoisy2}
\end{equation}
where
$	\bm{n} = \SensM \bm{n}_{\rm s}+ \bm{n}_{\rm m}$,
while $\bm{n}_{\rm s}$ and $\bm{n}_{\rm m}$ denote the signal and the measurement noise, respectively. 
%
%
We assume throughout that $\bm{n}_{\rm s}$ and $\bm{n}_{\rm m}$ are independent random vectors with i.i.d.~zero-mean Gaussian (normal) distributed elements with variance $\sigma_{\rm s}^2$ and $\sigma_{\rm m}^2$, respectively.

	Under 
	this assumption, the covariance matrix $\bm{\Sigma}$ of the total noise vector $\bm{n}$ becomes
\begin{equation}
	\bm{\Sigma} = \sigma_{\rm s}^2 \SensM \MTXT{\SensM} + \sigma_{\rm m}^2{\mathbf{I}}_{M}, 
	\label{eq:NoiseCovG}
\end{equation}
where ${\mathbf{I}}_{M}$ represents an $M \times M$ identity matrix and $(\cdot)^{\rm T}$ denotes the matrix transpose. Expression \eqref{eq:NoiseCovG} shows a first consequence of the application of $\SensM$ to $\bm{x}$: the coloring of the signal noise $\bm{n}_{\rm s}$  when $\SensM \MTXT{\SensM} \neq c {\mathbf{I}}_{M}$. 
%
In a special case, when the rows of $\SensM$ are orthogonal with an equal norm of $ \sqrt{\frac{N}{M}}$, the noise $\bm{n}$ is white with covariance
\begin{equation}
	\bm{\Sigma} = \frac{1}{\rho}\sigma_{\rm s}^2 {\mathbf{I}}_{M} + \sigma_{\rm m}^2{\mathbf{I}}_{M} = \left(\frac{1}{\rho}\sigma_{\rm s}^2 + \sigma_{\rm m}^2\right){\mathbf{I}}_{M}, 
	\label{eq:NoiseCovWhite}
\end{equation}
where $\rho = \frac{M}{N}$.
From \eqref{eq:NoiseCovWhite}, it can be seen that the variance of the signal noise after compression increases by the factor of $\frac{1}{\rho}= \frac{N}{M}$. This is because the sensing matrix  $\SensM$ combines the input noise along the entire $N$-dimensional space, whereas the signal resides in its $K$-dimensional sub-space. The resulting increase of the signal noise power  in the compressed measurements is known as the \textit{noise folding} effect \cite{Arias-CastroEtAl2011, DavenportEtAl2012}.

\vspace*{-0.2cm}
\subsection{\gls{snr} Measures}
 To this end, we 
 denote by $\SuppM{\bm{x}}$ the support of $\bm{x}$ and by \changed{$P_{\rm s} = \meanof{\|\bm{x}\|_2^2}$} the total signal power.
\changed{Given 
\eqref{eq:Ch2_MeasNoisy2}, several types of \gls{snr} can be considered \cite{DavenportEtAl2012}.
%
Among these, the so-called output \gls{snr} is of special importance as it expresses}
 the ratio between the total signal power after the measurement to the total noise power:
\begin{align}
\label{eq:SNRN}
	\eta_{{\rm O}} \overset{\Delta}{=}   \frac{\| \SensM \bm{x} \|^2 _2}{\mathbb{E}\{\|\bm{n}\|^2_2\}} &= \frac{\| \SensM \bm{x} \|^2 _2}{\mathbb{E}\{\|\SensM \bm{n}_{\rm s} + \bm{n}_{\rm m}\|^2_2\}} \nonumber \\
            &  =\frac{ 1}{M\sigma_0^2}\sum_{m=1}^{M} \left(\sum_{i \in
            \SuppM{\bm{x}}}{a}_{m,i}  {x_i} \right)^2 ,
\end{align}
where 
$	\sigma_0^2 =  \frac{1}{M}\text{trace}\{ \bm{A}\bm{A}^{\rm T}\} \sigma^2_{\rm s}+\sigma_{\rm m}^2$.

Expression \eqref{eq:SNRN} reveals another important effect arising from applying the sensing matrix $\SensM$ in \eqref{eq:Ch2_MeasNoisy2}, namely the  dependency of the effective signal power on the entries of the sensing matrix corresponding to the support of the input signal. 
Note that in most applications, the sensing matrix would be fixed at least for some time after its elements are chosen (e.g., drawn according to some probability distribution), since a truly random measurement is often impractical from the hardware viewpoint. 
  \changed{Therefore, for fixed noise powers,  $\osnr$ is generally a function of two variables: the support $\SuppM{\bm{x}}$ via the corresponding values $a_{m,i}$ and the non-zeros $x_i$  that we arrange into a sequence  $\mathcal{X} = \{x_{i_1},\dots, x_{i_K}\}$ where $\forall k\in [1, K-1] \,  i_k< i_{k+1} \in \SuppM{\bm{x}}$. 
The fact that the magnitudes of $\bm{x}$ have an impact on the SNR is not surprising  as the SNR is meant to be a measure of the signal power with respect to the noise. 	
What distinguishes \eqref{eq:SNRN}, is that, other things being equal, the change of the signal support  can lead to the change of the output\footnote{\changed{This does not occur in the traditional Nyquist-rate sensing when $\SensM = \mathbf{I}_N$.}} SNR.}
As a result, the effective SNR might vary depending on the positions of the non-zeros in $\bm{x}$ leading to  potentially non-uniform (over the support of the input signal) system performance. 

\changed{
To illustrate this, consider another \gls{snr} measure known as the recovered \gls{snr} \cite{DavenportEtAl2012}. Defined as 
\begin{equation}
	\eta_{{\rm R}} \overset{\Delta}{=} \frac{ \| \bm{x} \|^2_2}{ \mathbb{E}\{\|\hat{\bm{x}}-\bm{x}\|^2_2\}},
	\label{eq:rsnr_def}
\end{equation}
it accounts for the ratio of the signal power to the power of the residual noise present after reconstruction.
Naturally, the recovered \gls{snr} largely depends on the particular algorithm used to solve \eqref{eq:Ch2_MeasNoisy1}. To circumvent this, we adopt an oracle-assisted approach to performance evaluation  that assumes that the support of $\bm{x}$ is known prior to the recovery \cite{haupt2009compressive, DavenportEtAl2012, laska2012regime}. In doing so, we evaluate the best-case performance which sets a benchmark for any practical recovery method.}

\changed{
Once the true support $\SuppM{\bm{x}}$ is known, we can write \eqref{eq:rsnr_def} as
\begin{equation}
            \rsnr = \frac{\|\bm{x}\|_2^2}{\meanof{\| \SensM^{\dagger}_{\SuppM{\bm{x}}} \bm{y} - \bm{x}  \|_2^2}} = \frac{\|\bm{x}\|_2^2}{\meanof{\| \SensM^{\dagger}_{\SuppM{\bm{x}}} \bm{n} \|_2^2}},
            \end{equation}}
%
\changed{
The ratio of $\rsnr$ to $\osnr$ can then be  bounded \cite{DavenportEtAl2012} as
\begin{equation}
    \left(\frac{1-\delta}{1+\delta} \right) \frac{M}{K}     \leq    \frac{\rsnr}{\osnr} \leq \left(\frac{1+\delta}{1-\delta}\right) \frac{M}{K},
    \label{eq:rsnr_bound_oracle}
\end{equation}
where $\delta \in (0,1)$ is the \gls{rip} constant \cite{candes2008restricted}.
}
\changed{
Inequality \eqref{eq:rsnr_bound_oracle} shows that bounds on the best-case recovered \gls{snr}  scale linearly with the output \gls{snr}. This in turns indicates that the variation of the output \gls{snr} with respect to the signal support will result in a corresponding variation of the bounds on $\rsnr$; hence  we can expect a non-uniform (best-case) recovery performance over different signal supports. Furthermore, as the support estimation is the most challenging aspect of sparse recovery it is reasonable to suspect that in practice such an \gls{snr} spread might have an even more dramatic impact. }

\changed{
The goal of this study is to investigate this particular effect,  the variation of the output SNR over the support of $\bm{x}$. We are particularly interested in the impact of the choice of $\SensM$ on the spread of $\osnr$ as this is  what differentiates the CS approach form the traditional one.} 



\section{Analysis of the Output \gls{snr}}
\label{sec:SNRAn}

\subsection{SNR Spread Evaluation}

From \eqref{eq:SNRN}, the output SNR $\osnr$ depends on the support of $\bm{x}$ via $\beta =\sum_{m = 1}^M \left(\sum_{k \in \SuppM{\bm{x}}}{a}_{m,k}{x_k}\right)^2$. 
\changed{
When $\SensM$ is fixed, all $a_{m,n}$ are deterministic. Hence, for any  given $\mathcal{X}$, we can potentially compute a conditional frequency distribution $h_{\SensM}(\osnr | \mathcal{X})$, as well as the sample mean and the sample variance of $(\osnr| \mathcal{X})$, by going through all possible supports $\SuppM{\bm{x}}$. The notation $h_{\SensM}(z)$ here indicates that the distribution of $z$ is subject to change with the change of $\SensM$.
To account for different $\mathcal{X}$, we can repeat this procedure for different combinations of signal magnitudes and average the results. This would result in a marginal frequency distribution $h_{\bm{A}}(\osnr)$ over the support $\SuppM{\bm{x}}$.
}
Although evaluating the spread of the output SNR  this way allows us to characterize a particular realization of $\SensM$, it might become computationally unfeasible \changed{as calculating $h_{\SensM}(\osnr | \mathcal{X})$ even for a single choice of magnitudes already requires checking $C_N^K=\frac{N!}{K! (N-K)!}$ possible combinations which is known to be NP-hard.}
When the elements of $\SensM$ are drawn from some probability distribution and $N$ is large enough, we can eliminate this difficulty by approximating 
\changed{the frequency distribution $h_{\SensM}(\osnr | \mathcal{X})$} by the (analytic) probability distribution \changed{$f_{\SensM}(\osnr | \mathcal{X})$ derived by modelling the elements of $\SensM$ as~i.i.d. random variables.} 
In the following, we do so on the example of sensing matrices common\footnote{
Practically, one often measures not the sparse signal $\bm{x}$ itself but its representation in some basis $\Basis$. In this case, we arrive at the canonical \gls{cs} model of \eqref{eq:Ch2_MeasNoisy1} by expressing $\SensM$ as  $\SensM = \MeasM^{\rm T} \Basis$ where $\MeasM$ is now a matrix to be designed. This way, when the elements of $\MeasM$ are drawn according to some distribution and $\Basis$ is known, we can determine the distribution of $\bm{A}$ 
by analyzing the corresponding random variables $a_{m,k} = \bm{\phi}^{\rm T}_m \bm{\psi}_k$ where $\bm{\phi}_i$  and $\bm{\psi}_i$ denote $i$th columns of $\MeasM$ and $\Basis$, respectively. \changed{Note that in this case, another common choice of $\MeasM$ is the (random) selection matrix.}} in \gls{cs}, namely Gaussian, Bernoulli and Rademacher $\SensM$.

\subsection{Analytic Analysis}

\changed{
Note that when $a_{m,n}$ are independently drawn from some probability distribution $f(\alpha)$,  each row of $\SensM$ can be interpreted as containing an $N$-point sample from $f(\alpha)$}. 
Writing \eqref{eq:SNRN} as
\begin{equation}
	 \eta_{\rm O} =\frac{ 1}{M\sigma_0^2} \beta 
	 = \frac{ 1}{M\sigma_0^2}\sum_{m=1}^{M} d_{m}^2,
	 \label{eq:OSNRK}
\end{equation}
where $d_{m} = \sum_{k \in \SuppM{\bm{x}}}{a}_{m,k}  {x_k}$, \changed{we see that for a given $\mathcal{X}$, $d_m$ is a linear combination of $K$ realizations $a_{m, k}$ of some random variable $\alpha_m \sim f(\alpha)$. The support $\SuppM{\bm{x}}$ in this case defines which specific $K$ out of $N$ subset of realizations is taken. This enables the approximation of the frequency distribution $h_{\SensM}(\osnr | \mathcal{X})$ calculated for $M$ particular sets of realizations $\{a_{m,n}\}_{n=1}^N$ by  the probability distribution $f_{\SensM}(\osnr | \mathcal{X})$ computed under the assumption that $\{\alpha_m\}_{m=1}^M$ are~i.i.d.~random variables\footnote{\changed{We can do so for any deterministic sparsity pattern model including structured models such as block sparsity for instance. The analytic distribution $f_{\SensM}(\osnr| \mathcal{X})$ will not change in this case, whereas the approximation quality will deteriorate with the decrease in the size of the set of possible supports. On the other hand, imposing some probabilistic constraints on $\SuppM{\bm{x}}$ will require considering a joint distribution $f(\SensM, \SuppM{\bm{x}})$ to derive $f_{\SensM, \SuppM{\bm{x}}}(\osnr| \mathcal{X})$.}}. Once $f_{\SensM}(\osnr | \mathcal{X})$ is known we can marginalize $\mathcal{X}$ out to obtain 
\begin{equation}
     f_{\SensM}(\osnr) = \int_{\mathcal{X}} f_{\SensM}(\osnr|\mathcal{X}) f(\mathcal{X}) \mathsf{d}\mathcal{X} = \mathbb{E}_{\mathcal{X}}\{f_{\SensM}(\osnr|\mathcal{X})\},
     \label{eq:cond}
\end{equation}
where $\mathbb{E}_{\mathcal{X}}\{ \cdot \}$ means the average over the ensemble of  $\mathcal{X}$. Moreover, to evaluate the spread of $\osnr$ it is sufficient to calculate the mean $\mathbb{E}\{\osnr\} = \mathbb{E}_{\mathcal{X}}\{\mathbb{E}_{\SensM}\{\osnr| \mathcal{X}\}\}$ and the variance ${\rm var}\{\osnr\} = \mathbb{E}\{\osnr^2\} - \mathbb{E}^2\{\osnr\}$.}

\subsubsection{Gaussian $\SensM$}
Suppose the elements of $\SensM$ are drawn from a zero-mean normal Gaussian distribution such that $a_{m,n} \sim \mathcal{N}(0, 1/{M})$. Then, $d_{\rm m}$ is a zero-mean normal variable with variance $\sum_{k=1}^K {x_k}^2/M = \|\bm{x}\|_2^2/M$, whereas $\beta \frac{M}{\|\bm{x}\|_2^2}$ is a random variable distributed according to the chi-squared distribution with $M$ degrees of freedom, i.e., $\beta \frac{M}{\|\bm{x}\|_2^2} \sim \chi_M^2$. 
Consider now the following Lemma.
\begin{lemma}
	Denote by  $\Gamma(k, \theta)$ a Gamma distribution with a shape parameter $k$ and a scale parameter $\theta$. If $Y \sim \chi^2_L$  and $c$ is a positive constant, then $cY \sim \Gamma \left( \frac{L}{2}, \frac{2}{c} \right)$. 
	\label{thm:LemmaSNR1}
\end{lemma}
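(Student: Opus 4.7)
The plan is to invoke the standard identification of the chi-squared distribution as a special case of the Gamma family, and then apply the elementary scaling rule for Gamma random variables. Under the shape--scale convention stated in the lemma, $Y \sim \chi^2_L$ has density
\[
    f_Y(y) = \frac{1}{2^{L/2}\,\Gamma(L/2)}\, y^{L/2-1}\, e^{-y/2}, \qquad y>0,
\]
which is precisely the density of a $\Gamma(L/2, 2)$ random variable. All that remains is to verify that multiplying by a positive constant $c$ transforms the scale parameter in the expected way.

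The cleanest route is a one-line change of variables. Setting $Z=cY$ with $c>0$, the Jacobian $|dy/dz|=1/c$ gives
\[
    f_Z(z) = \frac{1}{c}\, f_Y(z/c) = \frac{1}{(2c)^{L/2}\,\Gamma(L/2)}\, z^{L/2-1}\, e^{-z/(2c)}, \qquad z>0,
\]
which I read off as the density of a Gamma distribution with shape $L/2$ and scale $2c$. An equivalent argument using moment generating functions proceeds from $M_Y(t)=(1-2t)^{-L/2}$ for $t<1/2$: one immediately gets $M_{cY}(t)=M_Y(ct)=(1-2ct)^{-L/2}$, which is the MGF of $\Gamma(L/2, 2c)$, and uniqueness of MGFs on an open interval around $0$ closes the argument.

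There is no real obstacle here; the lemma is essentially bookkeeping, intended to let the authors plug in $Y=\beta M/\|\bm{x}\|_2^2$ and $c=\|\bm{x}\|_2^2/(M^2\sigma_0^2)$ in the Gaussian case and read off the distribution of $\osnr$ directly. The only point I would flag before applying it downstream is that both derivations above produce scale $2c$, whereas the statement writes $2/c$; this looks like either a typographical slip or an implicit swap between the rate and scale conventions, and should be reconciled with the formulas that follow so that the constants in the subsequent moments of $\osnr$ come out consistently.
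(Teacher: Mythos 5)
Your proof is correct and complete; the paper itself gives no argument for this lemma, offering only a citation to a probability textbook, so your change-of-variables computation (with the MGF argument as a cross-check) is a perfectly good self-contained substitute. More importantly, the issue you flag about the constant is real and worth emphasizing: under the shape--scale convention the lemma explicitly adopts, $Y\sim\chi^2_L$ is $\Gamma(L/2,2)$ and multiplication by $c>0$ multiplies the scale parameter, so $cY\sim\Gamma\left(\frac{L}{2},2c\right)$, not $\Gamma\left(\frac{L}{2},\frac{2}{c}\right)$ as stated; under a shape--rate convention one would instead get rate $\frac{1}{2c}$, so $\frac{2}{c}$ is not correct under either reading. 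The paper's own downstream use of the lemma confirms your version: with $Y=\beta M/\|\bm{x}\|_2^2\sim\chi^2_M$ and $c=\|\bm{x}\|_2^2/M$ it concludes $\beta\sim\Gamma\left(\frac{M}{2},\frac{2\|\bm{x}\|_2^2}{M}\right)$, which is the $2c$ rule, and the displayed distribution of $\eta_{\rm O}$ together with the moments $k\theta$ and $k\theta^2$ are likewise consistent with scale $2c$. So the $\frac{2}{c}$ in the lemma statement is a typographical slip that does not propagate into the subsequent results.
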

\begin{proof}
See \cite{taboga2012lectures}.	
\end{proof}

From Lemma \ref{thm:LemmaSNR1}, we have that
$\beta \sim \Gamma\left(\frac{M}{2}, \frac{2 \|\bm{x}\|_2^2}{M}\right)$ and hence
\begin{equation}
	\changed{f_{\SensM}(\eta_{\rm O}| \mathcal{X}) =} \Gamma \left( \frac{M}{2},  \frac{2 \changed{\|\bm{x}\|_2^2}}{M^2\sigma_0^2} \right). 
 \label{eq:OSNR_distr}
\end{equation}
 The mean and variance of $\changed{(}\eta_{\rm O}\changed{|\mathcal{X})}$ are given by
	$\mathbb{E}\changed{_{\SensM}}\{\osnr\changed{| \mathcal{X}}\} = k\theta = \frac{\changed{\|\bm{x}\|_2^2}}{M\sigma_0^2}$
and
${\rm var}\changed{_{\SensM}}\{\osnr\changed{| \mathcal{X}}\} = k\theta^2 = \frac{2}{M} \changed{\mathbb{E}^2_{\SensM}\{\osnr| \mathcal{X}\}}$,
respectively. 
\changed{From \eqref{eq:OSNR_distr}, $f_{\SensM}(\osnr | \mathcal{X})$ depends on $\mathcal{X}$ only through $\|\bm{x}\|_2^2$. Therefore,}
\begin{equation}
\changed{
     \meanof{\osnr} = \mathbb{E}_{\mathcal{X}}\{ \mathbb{E}_{\SensM}\{ \osnr | \mathcal{X}\}\} = \mathbb{E}_{\mathcal{X}}\left\{ \frac{\|\bm{x}\|_2^2}{M\sigma_0^2} \right\} = \vartheta,}
     \label{eq:mean_G}
\end{equation}
where $\vartheta = \frac{P_{\rm s}}{M\sigma_0^2}$ denotes the ratio of the total signal power to the total system noise power. 
\changed{
Finally, we note that due to \eqref{eq:OSNR_distr}, $(\osnr^2 | \mathcal{X})$ is distributed according to  the generalized Gamma distribution with parameters $p = 0.5$, $d = 0.5k$ and $a = \theta^2$. Taking this into account, we have that $\mathbb{E}_{\SensM}\{\osnr^2| \mathcal{X}\} = \theta^2 \frac{\Gamma(k+2)}{\Gamma(k)} = \theta^2 (k+1)k$ and hence
\begin{equation}
    \meanof{\osnr^2} 
    = \left(\frac{M}{2}+1\right)\frac{2}{M} \vartheta^2 = \left(1+\frac{2}{M}\right)\vartheta^2.
\end{equation}
Therefore, the variance of $\osnr$ can be calculated as
\begin{align}
      \varof{\osnr} 
      = \left(1+\frac{2}{M}\right)\vartheta^2 - \vartheta^2 
      = \frac{2}{M} \vartheta^2.
      \label{eq:var_G}
\end{align}
}
\changed{Given \eqref{eq:mean_G} and \eqref{eq:var_G}, we can calculate the coefficient of variation for a Gaussian $\SensM$  as
\begin{equation}
    c_{{\rm v}}(\osnr) = \sqrt{\frac{\varof{\osnr}}{(\meanof{\osnr})^2}} = \sqrt{\frac{2}{M}}.
    \label{eq:CoefVarGaus}
\end{equation}
Note that the \gls{snr} spread in this case depends only on the number of measurements $M$. 
}

\subsubsection{Bernoulli $\SensM$}
Consider now a  Bernoulli distributed $\SensM$, i.e.,  $a_{m,n} \sim \mathcal{B}_e(p)$. Each $a_{m, k} x_k$ in this case is distributed as scaled Bernoulli where the scaling depends on the signal values. 
%
The distribution of the $K$-term sum of such random variables 
can be described directly via  probabilities as 
\begin{equation}
    \text{Pr}\left(d_m = \sum_{k=1}^K b_{n,k}  x_k \right) = p^n(1-p)^{K-n},
\label{eq:pr_bm_bernouli}
\end{equation}
where $b_{n,k} \in \{0, 1\}$ is a $k$th element of a binary vector $\bm{b}_n$ of length $K$ that contains exactly $n \in [0, K]$ ones
. From \eqref{eq:pr_bm_bernouli}, the distribution of $\beta = \sum_{m =1}^M d_m^2 $ depends on the choice of $\mathcal{X}$  
and, generally, does not converge to any well-defined form. However, in the special case when all non-zero elements of $\bm{x}$ are equal, 
$\changed{(}d_m\changed{|\mathcal{X})}$ becomes scaled Binomial distributed 
which allows us to calculate the mean and variance of $d_m^2$ as
\begin{align}
\label{eq:mean_dm_bern}
	\mathbb{E}\{d_m^2\} &= \text{var}\{d_m\}+\mathbb{E}\{d_m\}^2 = \big((K-1)p+1\big)pP_{\rm s} \\
	\text{var}\{ d_m^2\} &= 
	  \frac{1+ 2p\big(K-1\big)\big((2K-3)p+3\big)}{K}(1-p)pP_{\rm s}^2.
	 \label{eq:var_dm_bern}
\end{align}
\changed
{Since $d_m^2$ are independent, we have that $\meanof{\beta} = M \meanof{d_m^2}$ and $\varof{\beta} = M\varof{d_m^2}$. Therefore, we obtain
\begin{align}
\notag
     c_{{\rm v}}(\osnr) &= \sqrt{\frac{\varof{d_m^2}}{M(\meanof{d_m^2})^2}} \\
     &=\sqrt{\frac{1}{M} \frac{1+ 2p\big(K-1\big)\big((2K-3)p+3\big)}{K\big((K-1)p+1\big)^2p}(1-p)}.
     \label{eq:cv_bern}
\end{align}
Comparing \eqref{eq:CoefVarGaus} with \eqref{eq:cv_bern}, we can see that, additionally to $M$, the SNR spread for a Bernoulli sensing matrix also depends on the number of signals $K$.
}

\subsubsection{Rademacher $\SensM$}
A similar situation occurs with the Rademacher distributed $\SensM$. Since each $a_{m, k} x_k$ can take the value of $x_k$ or $-x_k$ with equal probability, the distribution of $d_m$ and, therefore that of $\beta$ and $\eta_{\rm O}$, inevitably depends on the signal values when $K>1$. For the case of equal signal magnitudes, one can show however that the mean and variance of $d_m^2$ becomes $KP_{\rm s}$ and $2K(K-1)P_{\rm s}^2$, respectively. \changed{This results in the following coefficient of variation 
\begin{equation}
     c_{\rm v}(\osnr) = \sqrt{\frac{2MK(K-1) P_{\rm s}^2}{M^2 K^2 P_{\rm s}^2}} = \sqrt{\frac{1}{M}\frac{2(K-1)}{K}}.
\end{equation}}

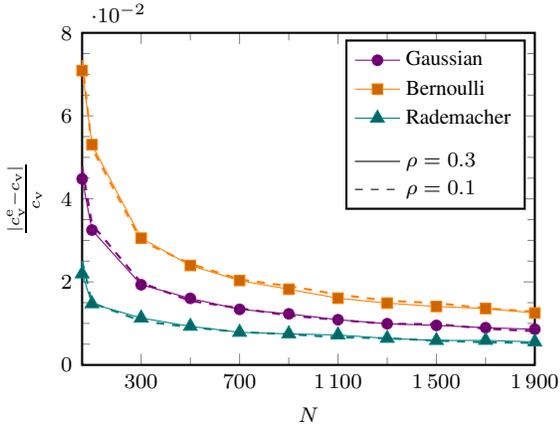
\begin{figure}[t]
\vspace*{-\baselineskip}
\vspace*{-0.05cm}
\centering
 \begin{center}
 
\begin{tikzpicture}[baseline]
\begin{axis}[height = 6cm, width =7.6 cm,, enlargelimits=false, 
minor x tick num=1,
minor y tick num=3,
ymax = 0.08,
ymin = 0,
xtick={300, 700, 1100, 1500, 1900},
legend cell align = left,
x tick label style={/pgf/number format/.cd,%
          scaled x ticks = false,
          set thousands separator={\,},
          fixed},
ylabel near ticks, xlabel near ticks, font=\footnotesize, thick, xlabel style={align=center}, xlabel={$N$
}, ylabel={$\frac{|c_{\rm v}^{\rm e} - c_{\rm v} |}{c_{\rm v}}$}]


\addplot[thin, violet, mark=otimes*,mark options = {fill=violet!80!black}] table[x index=0,y index=1, col sep=space] {CoeFvar_MSE_Gaus_03rho2K.txt};
\addlegendentry{Gaussian}

\addplot[thin, color = orange, mark =square*, mark options = {fill=orange!80!black}]
table[x index=0,y index=1, col sep=space] {CoeFvar_MSE_Bern_03rho2K.txt};
\addlegendentry{Bernoulli}

\addplot[thin, teal, mark =triangle*, mark size = 3 pt, mark options = {fill=teal!80!black}] table[x index=0,y index=1, col sep=space] {CoeFvar_MSE_Rad_03rho2K.txt};
\addlegendentry{Rademacher}

\addplot[draw=none, thin, color = white] coordinates {(1,1)};
\addlegendentry{}

\addplot[draw=none, thin, color = black] coordinates {(1,1)};
\addlegendentry{$\rho = 0.3$}
\addplot[draw=none, thin, dashed, color = black] coordinates {(1,1)};
\addlegendentry{$\rho = 0.1$}

\addplot[dashed, violet] table[x index=0,y index=1, col sep=space] {CoeFvar_MSE_Gaus_rho2K.txt};
\addplot[color = orange, dashed]
table[x index=0,y index=1, col sep=space]{CoeFvar_MSE_Bern_rho2K.txt};
\addplot[dashed, teal] table[x index=0,y index=1, col sep=space] {CoeFvar_MSE_Rad_rho2K.txt};

\end{axis}
\end{tikzpicture}

\end{center}
\vspace*{-0.3cm}
 \caption{\changed{Normalized RMSE between the empirical and the analytic coefficient of variation for $K=2$.}}\medskip
 \label{fig:CoefVar_MSE}
  \vspace*{-0.4cm}
\end{figure}

\section{Numerical Evaluation}
In this section, we numerically demonstrate the influence of the choice of the sensing matrix on the output \gls{snr}. To do so, we first generate $\bm{A}$ such that its entries are drawn from one of the  considered probability distributions. Then, for each realization of $\bm{A}$ we compute the output \gls{snr} $\eta_{\rm O}$  according to \eqref{eq:SNRN} for different supports $\SuppM{\bm{x}}$ \changed{and different choices of $\mathcal{X}$}. 

\changed{
To validate the derived analytic expressions, we calculate the average (among the realizations of the sensing matrix) \gls{rmse} between the coefficients of variation obtained numerically ($c_{\rm v}^{\rm e}$) and analytically ($c_{\rm v}$) for the case of equal signal magnitudes. Figure~\ref{fig:CoefVar_MSE} displays normalized \gls{rmse} as a function of $N$ 
for $K = 2$ and two values of $\rho$. In all cases, the \gls{rmse} does not exceed $7\%$ of the predicted value with the best correspondence exhibited by the Rademacher $\SensM$. As expected, the error decays as $N$ grows due to the increasing accuracy of the  analytic approximation, while being independent of the compression rate $1/\rho$.} This is due to the fact that the higher the dimension $N$ is, the more representative each row of $\bm{A}$ is  of $f(\alpha)$.

\changed{
To investigate how the choice of $\mathcal{X}$ influences the coefficient of variation, Figure~\ref{fig:CoefVar_EMPR} shows average (among $10^3$ realizations of $\bm{A}$ and $\mathcal{X}$) empirical coefficient of variation $c_{\rm v}^{\rm e}$ normalized to $1/\sqrt{M}$ as a function of $K$ computed for three considered types of $\SensM$ and different models on $\mathcal{X}$, namely i) all $x_i$ have equal magnitudes; ii) $x_i$ are ~i.i.d.~random variables distributed according to $\mathcal{N}(0, 1/{K})$; iii) $x_i$ are ~i.i.d.~random variables distributed according to $\mathcal{U}_{[-\sqrt{3/K}, \sqrt{3/K}]}$. 
As expected, the coefficient of variation for Gaussian $\SensM$ depends on neither the value of $K$ nor the type of $\mathcal{X}$ and it is equal to $c_{\rm v}^{\rm e} \cdot \sqrt{M} = \sqrt{2}\approx 1.4$. As for the Bernoulli and Rademacher sensing matrices, the results for equal magnitudes somewhat differ from those for other choices of $\mathcal{X}$ reflecting the dependency of $f_{\SensM}(\osnr|\mathcal{X})$ on $\mathcal{X}$.  Nevertheless,  they provide a lower spread of the output SNR than that of the Gaussian $\SensM$. }
It is worth nothing that in all considered cases the coefficient of variation is inversely proportional to $M$  leading to higher values for compression rates typical in \gls{cs}.


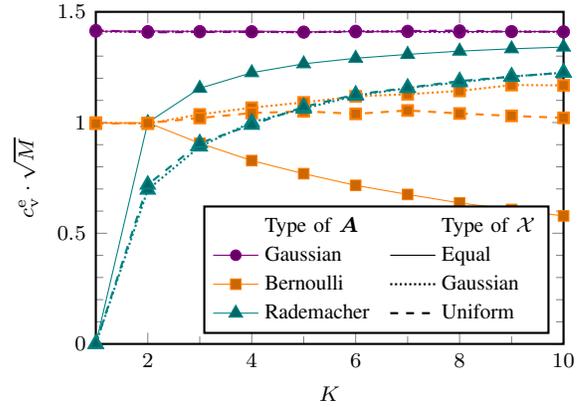
\begin{figure}[t]
\vspace*{-0.3cm}
\centering
 \begin{center}
\begin{tikzpicture}
\begin{axis}[height = 6cm, width =7.8 cm, 
minor x tick num=1,
minor y tick num=4,
legend pos=south east,
legend cell align = left,
legend columns=2,
ymax = 1.5,
ymin = 0,
xmax = 10,
xmin = 1,
ylabel near ticks, xlabel near ticks, font=\footnotesize, thick, xlabel style={align=center}, xlabel={$K$}, ylabel={$c_{\rm v}^{\rm e} \cdot \sqrt{M}$},
tick label style={/pgf/number format/fixed}]

\addplot[draw=none, color = white] coordinates {(1,1)};
\addlegendentry{Type of $\SensM$}
\addplot[draw=none, color = white] coordinates {(1,1)};
\addlegendentry{Type of $\mathcal{X}$}

\addplot[thin, color = violet, mark =*, mark options = {fill=violet!80!black}]
table[x index=0,y index=1, col sep=space] {CoeFvar_Gaus_rho1K.txt};
\addlegendentry{Gaussian}
\addplot[draw=none, thin, color = black] coordinates {(1,1)};
\addlegendentry{Equal}

\addplot[thin, color = orange, mark =square*, mark options = {fill=orange!80!black}]
table[x index=0,y index=1, col sep=space] {CoeFvar_Bern_rho1K.txt};
\addlegendentry{Bernoulli}

\addplot[draw=none, densely dotted, color = black] coordinates {(1,1)};
\addlegendentry{Gaussian}

\addplot[thin, teal, mark =triangle*, mark size = 3pt, mark options = {fill=teal!80!black}] table[x index=0,y index=1, col sep=space] {CoeFvar_Rad_rho1K.txt};
\addlegendentry{Rademacher $\;$}

\addplot[draw=none, dashed, color = black] coordinates {(1,1)};
\addlegendentry{Uniform}

\addplot[color = violet, densely dotted, mark =*, mark options = {solid, fill=violet!80!black}]
table[x index=0,y index=1, col sep=space] {CoeFvar_Gaus_rho1K_g.txt};
\addplot[color = violet, dashed, mark =*, mark options = {solid, fill=violet!80!black}]
table[x index=0,y index=1, col sep=space] {CoeFvar_Gaus_rho1K_u.txt};

\addplot[color = orange, densely dotted, mark =square*, mark options = {solid, fill=orange!80!black}]
table[x index=0,y index=1, col sep=space] {CoeFvar_Bern_rho1K_g.txt};
\addplot[color = orange, dashed, mark =square*, mark options = {solid, fill=orange!80!black}]
table[x index=0,y index=1, col sep=space] {CoeFvar_Bern_rho1K_u.txt};

\addplot[ densely dotted, teal, mark =triangle*, mark options = {solid, mark size = 3pt, fill=teal!80!black}] table[x index=0,y index=1, col sep=space] {CoeFvar_Rad_rho1K_g.txt};
\addplot[ dashed, teal, mark =triangle*, mark options = {solid, mark size = 3pt, fill=teal!80!black}] table[x index=0,y index=1, col sep=space] {CoeFvar_Rad_rho1K_u.txt};




\end{axis}
\end{tikzpicture}
\end{center}
\vspace*{-0.3cm}
 \caption{\changed{ Average $c_{\rm v}^{\rm e}$ for equal (solid lines), Gaussian (dotted lines), and  Uniform (dashed lines) magnitudes with $N =300$.}}\medskip
 \label{fig:CoefVar_EMPR}
  \vspace*{-0.4cm}
\end{figure}

\section*{Acknowledgment}
We would like to thank anonymous reviewers for their valuable comments and suggestions.

\section{Conclusions}

In this work, we have considered the influence of a choice of a sensing matrix on the output SNR in the noisy compressed sensing setting. We have demonstrated that for a fixed signal power, the application of the sensing matrix can potentially result in a varying output SNR that depends on the signal support. We have shown how the distribution from which the elements of the sensing matrix are drawn can be used to evaluate the spread of the output SNR on the example of several types of matrices widely used in CS. Our numerical results show good correspondence between the analytic and empirical analysis confirming the intuition that the SNR variations can be significant. \changed{Therefore, this effect should not be overlooked during the system design.} 





%

\newpage
\bibliographystyle{IEEEbib}
\bibliography{BibliogrP}

\begin{thebibliography}{10}

\bibitem{donoho2006compressed}
D.~L Donoho,
\newblock ``Compressed sensing,''
\newblock {\em IEEE Transactions on information theory}, vol. 52, no. 4, pp.
  1289--1306, 2006.

\bibitem{CandesEtAl2008}
E.~J. Candes and M.~B. Wakin,
\newblock ``An introduction to compressive sampling,''
\newblock {\em Signal Processing Magazine, IEEE}, vol. 25, no. 2, pp. 21--30,
  2008.

\bibitem{EldarEtAl2012CSbook}
Y.~C. Eldar and G.~Kutyniok,
\newblock {\em Compressed sensing: theory and applications},
\newblock Cambridge University Press, 2012.

\bibitem{CandesEtAl2006}
E.~J. Cand{\`e}s, J.~Romberg, and T.~Tao,
\newblock ``Robust uncertainty principles: Exact signal reconstruction from
  highly incomplete frequency information,''
\newblock {\em Information Theory, IEEE Transactions on}, vol. 52, no. 2, pp.
  489--509, 2006.

\bibitem{Haupt2006NoisyReconstr}
J.~Haupt and R.~Nowak,
\newblock ``Signal reconstruction from noisy random projections,''
\newblock {\em IEEE Transactions on Information Theory}, vol. 52, no. 9, pp.
  4036--4048, 2006.

\bibitem{Ben-Haim2010Coherence}
Z.~Ben-Haim, Y.~C Eldar, and M.~Elad,
\newblock ``Coherence-based performance guarantees for estimating a sparse
  vector under random noise,''
\newblock {\em IEEE Transactions on Signal Processing}, vol. 58, no. 10, pp.
  5030--5043, 2010.

\bibitem{DonohoEtAl2003}
D.~L Donoho and M.~Elad,
\newblock ``Optimally sparse representation in general (nonorthogonal)
  dictionaries via l1 minimization,''
\newblock {\em Proceedings of the National Academy of Sciences}, vol. 100, no.
  5, pp. 2197--2202, 2003.

\bibitem{tropp2004greed}
J.~A Tropp,
\newblock ``Greed is good: Algorithmic results for sparse approximation,''
\newblock {\em IEEE Transactions on Information theory}, vol. 50, no. 10, pp.
  2231--2242, 2004.

\bibitem{tropp2006just}
J.~A Tropp,
\newblock ``Just relax: Convex programming methods for identifying sparse
  signals in noise,''
\newblock {\em IEEE transactions on information theory}, vol. 52, no. 3, pp.
  1030--1051, 2006.

\bibitem{elad2007optimized}
M.~Elad,
\newblock ``Optimized projections for compressed sensing,''
\newblock {\em IEEE Transactions on Signal Processing}, vol. 55, no. 12, pp.
  5695--5702, 2007.

\bibitem{candes2008restricted}
E.~J Candes,
\newblock ``The restricted isometry property and its implications for
  compressed sensing,''
\newblock {\em Comptes Rendus Mathematique}, vol. 346, no. 9, pp. 589--592,
  2008.

\bibitem{baraniuk2008simple}
R.~Baraniuk, M.~Davenport, R.~DeVore, and M.~Wakin,
\newblock ``A simple proof of the restricted isometry property for random
  matrices,''
\newblock {\em Constructive Approximation}, vol. 28, no. 3, pp. 253--263, 2008.

\bibitem{aeron2010information}
S.~Aeron, V.~Saligrama, and M.~Zhao,
\newblock ``Information theoretic bounds for compressed sensing,''
\newblock {\em IEEE Transactions on Information Theory}, vol. 56, no. 10, pp.
  5111--5130, 2010.

\bibitem{ben2012performance}
Z.~Ben-Haim, T.~Michaeli, and Y.~C Eldar,
\newblock ``Performance bounds and design criteria for estimating finite rate
  of innovation signals,''
\newblock {\em IEEE Transactions on Information Theory}, vol. 58, no. 8, pp.
  4993--5015, 2012.

\bibitem{Arias-CastroEtAl2011}
E.~Arias-Castro and Y.~C. Eldar,
\newblock ``Noise folding in compressed sensing,''
\newblock {\em Signal Processing Letters, IEEE}, vol. 18, no. 8, pp. 478--481,
  2011.

\bibitem{DavenportEtAl2012}
M.~A. Davenport, J.~N. Laska, J.~R. Treichler, and R.~G. Baraniuk,
\newblock ``The pros and cons of compressive sensing for wideband signal
  acquisition: Noise folding versus dynamic range,''
\newblock {\em Signal Processing, IEEE Transactions on}, vol. 60, no. 9, pp.
  4628--4642, 2012.

\bibitem{lavrenko2014sensing}
A.~Lavrenko, F.~R{\"o}mer, G.~Del~Galdo, and R.~S Thom{\"a},
\newblock ``On the sensing matrix performance for support recovery of noisy
  sparse signals,''
\newblock in {\em Signal and Information Processing (GlobalSIP), 2014 IEEE
  Global Conference on}. IEEE, 2014, pp. 679--683.

\bibitem{haupt2009compressive}
Jarvis~D Haupt, Richard~G Baraniuk, Rui~M Castro, and Robert~D Nowak,
\newblock ``Compressive distilled sensing: Sparse recovery using adaptivity in
  compressive measurements,''
\newblock in {\em 2009 Conference Record of the Forty-Third Asilomar Conference
  on Signals, Systems and Computers}. IEEE, 2009, pp. 1551--1555.

\bibitem{laska2012regime}
Jason~N Laska and Richard~G Baraniuk,
\newblock ``Regime change: Bit-depth versus measurement-rate in compressive
  sensing,''
\newblock {\em IEEE Transactions on Signal Processing}, vol. 60, no. 7, pp.
  3496--3505, 2012.

\bibitem{taboga2012lectures}
M.~Taboga,
\newblock {\em Lectures on probability theory and mathematical statistics},
\newblock CreateSpace Independent Pub., 2012.

\end{thebibliography}

\end{document}